\documentclass[journal]{IEEEtran}
\IEEEoverridecommandlockouts
\usepackage{cite}
\usepackage{amsmath,amssymb,amsfonts}
\usepackage{amsthm}
\usepackage{graphicx}
\usepackage{textcomp}
\usepackage{xcolor}
\usepackage{comment}
\usepackage{cite}
\usepackage{algorithm}
\usepackage{xcolor}
\usepackage{hyperref}
\usepackage[capitalise]{cleveref}
\usepackage{svg}
\usepackage{algpseudocodex}
\usepackage{circuitikz}
\usepackage{subcaption}
\usepackage{orcidlink}
\usepackage[disable]{todonotes}
\hypersetup{
    colorlinks,
    linkcolor={red!50!black},
    citecolor={blue!50!black},
    urlcolor={blue!80!black}
}

\Crefname{fig}{Fig.}{Figs.}
\captionsetup{compatibility=false}

\newtheorem{theorem}{Theorem}[section]

\newcommand{\signal}[1]{\texttt{#1}}
\newcommand{\TODO}[1]{\todo[inline]{#1}}
\newcommand{\set}[1]{\left\{ #1 \right\}}

\def\BibTeX{{\rm B\kern-.05em{\sc i\kern-.025em b}\kern-.08em
    T\kern-.1667em\lower.7ex\hbox{E}\kern-.125emX}}
\begin{document}

\title{Latch Based Design for Fast Voltage Droop Response\\

\thanks{This work was supported in part by the European Research Council (ERC) through the European Union’s Horizon 2020 Research and Innovation Programme under Grants 716562 and 101123525.This work was supported in part by the European Research Council (ERC) through the European Union’s Horizon 2020 Research and Innovation Programme under Grants 716562 and 101123525.}}

\author{
\IEEEauthorblockN{1\textsuperscript{st} Shreyas Srinivas}
\IEEEauthorblockA{\textit{CISPA Helmholtz Center for Information Security}
\orcidlink{https://orcid.org/0000-0002-3993-1596}\\
shreyas.srinivas@cispa.de}\\
\and
\IEEEauthorblockN{2\textsuperscript{nd} Ian W. Jones}
\IEEEauthorblockA{\textit{}
Palo Alto, California, USA \\
ian.w.jones@ieee.org}\\
\and
\IEEEauthorblockN{3\textsuperscript{rd} Goran Panic}
\IEEEauthorblockA{\textit{IHP Leibniz-Institute for Innovative Microelectronics} \\
Frankfurt an der Oder, Germany \\
panic@ihp-microelectronics.com}\\
\and
\IEEEauthorblockN{4\textsuperscript{th} Christoph Lenzen}
\IEEEauthorblockA{\textit{CISPA Helmholtz Center for Information Security} \\
Saarbrucken, Germany \\
lenzen@cispa.de}
}

\maketitle

\begin{abstract}
We present a latch-based and PLL-free design of the voltage droop correction circuit of Lenzen, Fuegger, Kinali, and Wiederhake\cite{DroopJournal}. Such a circuit dynamically modifies the clock frequency of a digital clock for VLSI systems. Our circuit responds within two clock cycles and halves the length of the synchroniser chain compared to the previous design. Further, we introduce a differential sensor based design for masking latches as a replacement for masking flip flops that the design of \cite{DroopJournal} requires, but leaves unspecified. The use of latches instead of threshold-altered flip flops alters the timing properties of our design and thus the proofs of correctness that accompanied their design require modifications which we present here. This design has been successfully implemented on the IHP 130 nm process technology. The results of the experimental measurements will be discussed in a subsequent publication.
\end{abstract}

\begin{IEEEkeywords}
Voltage Droops, Metastability, Latches
\end{IEEEkeywords}
\newcommand{\MCZeroLatch}{\signal{MC 0 Latch}}
\newcommand{\MCZeroOneLatch}{\signal{MC-Latch 01}}
\newcommand{\Latch}{\signal{D Latch}}
\newcommand{\CLKIn}{\signal{CLK\_IN}}
\newcommand{\CLKGated}{\signal{CLK\_Gated}}
\section{Introduction}
Synchronous VLSI circuits are state machines which perform state transitions at each rising clock edge. A critical requirement for maintaining this abstraction is that the combinational circuits computing the transitions finish before the next rising edge of the clock arrives. A droop in the supply voltage $\signal{VDD}$ interferes with this abstraction by slowing down all transistors. A number of solutions exist to deal with this problem, cf.~\cite{DroopJournal} for a discussion. One line of work focuses on adapting the clock frequency on the fly to allow for the extra computation time needed.

A key challenge in this approach lies in the metastability that might arise when measuring the analog property of a voltage droop in the $\signal{VDD}$ signal and converting it into a digital $\signal{droop\_detected}$ value. As noted by Marino~\cite{marino1981general}, metastability cannot be deterministically detected, resolved, or avoided. It arises from the mismatch between the response of physical circuits to continuous time-varying signals and the abstraction of discrete digital values. Previous solutions either rely on analog properties of the circuit, rendering them fragile to process variations and changes in technology, or funnel the droop detection signal through a synchronizer chain first. The latter means that any response to a droop happens several clock cycles after detection of its onset, which is acceptable for slow droops only.

\subsection{Previous Work}
F\"{u}gger, Kinali, and Wiederhake~\cite{DroopJournal} introduce a fast voltage droop response circuit, which responds to the detection of a droop within a clock cycle. They do this by modifying the synchronizer chain to delay the clock signal if the sampled value indicates a droop, while defaulting to fast propagation if metastability resolution is still ongoing. If metastability resolution results in a ``late'' decision that a droop was perceived, the clock signal delay is inserted late as well; as the measurement by the droop detector was inconclusive regarding whether the supply voltage had fallen below acceptable levels, this behavior is safe.

The reason why this design can circumvent Marino's result is that the timing delay can be continuously split between the clock pulse at which the resolution happens and the next one, i.e., no discrete decision between delaying and not delaying clock pulses is forced. The key circuit element enabling this is a \emph{masking flip flop} (hereon referred to as \signal{MFF}), which masks internal metastability until resolution takes place, translating it to a late output transition. In~\cite{DroopJournal}, the implementation of this device is left unspecified, but the following behaviour is required: For $b\in\set{0,1}$, a $\signal{b-MFF}$ is a flip flop that behaves like an ordinary flip flop when there is no metastability. When it is metastable internally, then for the duration of this metastability, it outputs $b$. Once the internal metastability resolves, the output becomes the respective value. Note that if metastability resolves to $b$, metastability is fully masked, whereas resolution to $1-b$ is observed as a late output transition from $b$ to $1-b$. Another critical assumption of their circuit design are stable delays across all used delay lines, which they achieve through the use of PLLs.

\begin{figure*}[t]
    \centering
    \includegraphics[width=1.8\columnwidth]{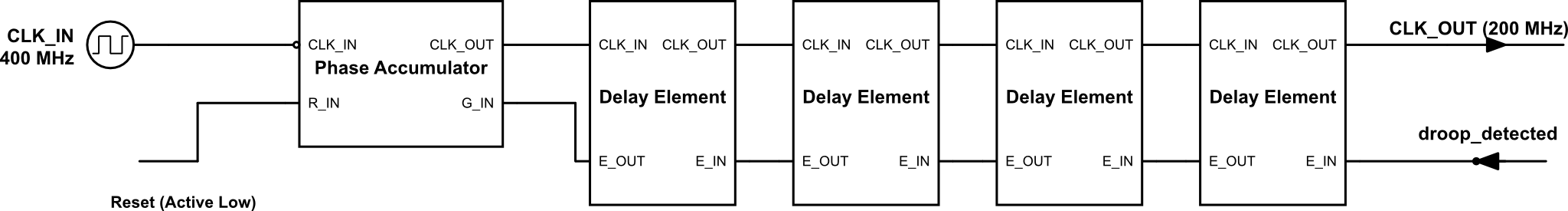}
    \caption{Frequency Adaptation Module: The high-level schematics are identical to~\cite{DroopJournal}, with the exception that the input frequency is twice the output frequency. This allows to add phase delay to the output clock of default period $T=50$\,ns in increments of $T/4$ without needing accurate delays within the phase accumulator implementation. The phase accumulator decides in each clock cycle whether to delay the next rising clock transition by $T/4$ based on the binary input \signal{G\_IN}. The delay elements form a synchronizer chain ensuring that the \signal{droop\_detected} signal, which is sampled with each element's output clock, exposes the phase accumulator to a negligible probability of setup/hold time violations only. However, the delay elements simultaneously serve the purpose of immediately adding a phase shift of $T/4$ whenever the \signal{droop\_detected} signal is stable $1$ upon being sampled. This requires a careful design of the delay element that guarantees a glitch-free clock signal with feasible timing even when the delay element suffers from metastability due to sampling a non-stable value at \signal{E\_IN}.}
    \label{fig:high_level_FAM}
\end{figure*}
\subsection{Our Contributions}
We introduce a latch based design of this circuit which does not use PLLs. Further, we specify and implement \emph{masking latches,} which replace the flip flops in~\cite{DroopJournal}.\footnote{Alternatively, one might take the view that we implement masking registers in which a pair of registers shares the first latch.} This surfaces challenges for the timing properties of the circuit that were hidden by the abstraction of masking registers. Further, for the purposes of completeness, we provide a rudimentary droop detector design, which we used in our chip. We elaborate on aspects of the design motivated by practical implementation issues, such as the asymmetry of buffer delays, the debilitating issues that arise when a proper PLL is unavailable as a cell to stabilise delay lines against process, temperature, and voltage variations, and the avoidance of a glitch in the clock signal that arises from violating certain timing constraints. We successfully implemented our design in the IHP 130\,nm technology.

\subsection{Paper Organisation}
In \Cref{sec:design}, we introduce our design of our circuit. This section includes a subsection for each component that describes their functionality as well as sketches of the proofs of correctness. In particular, we introduce our masking latches in this section (see \Cref{subsec:masking}). In \Cref{sec:synthesis}, we briefly describe the design choices for the RTL synthesis and layout of our test chip, including parameters like clock frequency and choice of IO ports. \TODO{Rewrite this properly at the end of the writing process}

\section{The Design}\label{sec:design}

In this section we elaborate on the design changes in the components of our circuit as compared to \cite{DroopJournal}. At a high level, the schematic is identical, cf.~\Cref{fig:high_level_FAM}, but some crucial design changes need to be made to the constituent components to ensure correct timing properties for a latch based design without accurate delay lines. First, we discuss our implementation of a masking latch, which is the key element enabling our design to mask metastability, guaranteeing a glitch-free output clock signal.

\subsection{Masking Latches}\label{subsec:masking}
\begin{figure}[t]
    \centering
    \includegraphics[width=0.45\textwidth]{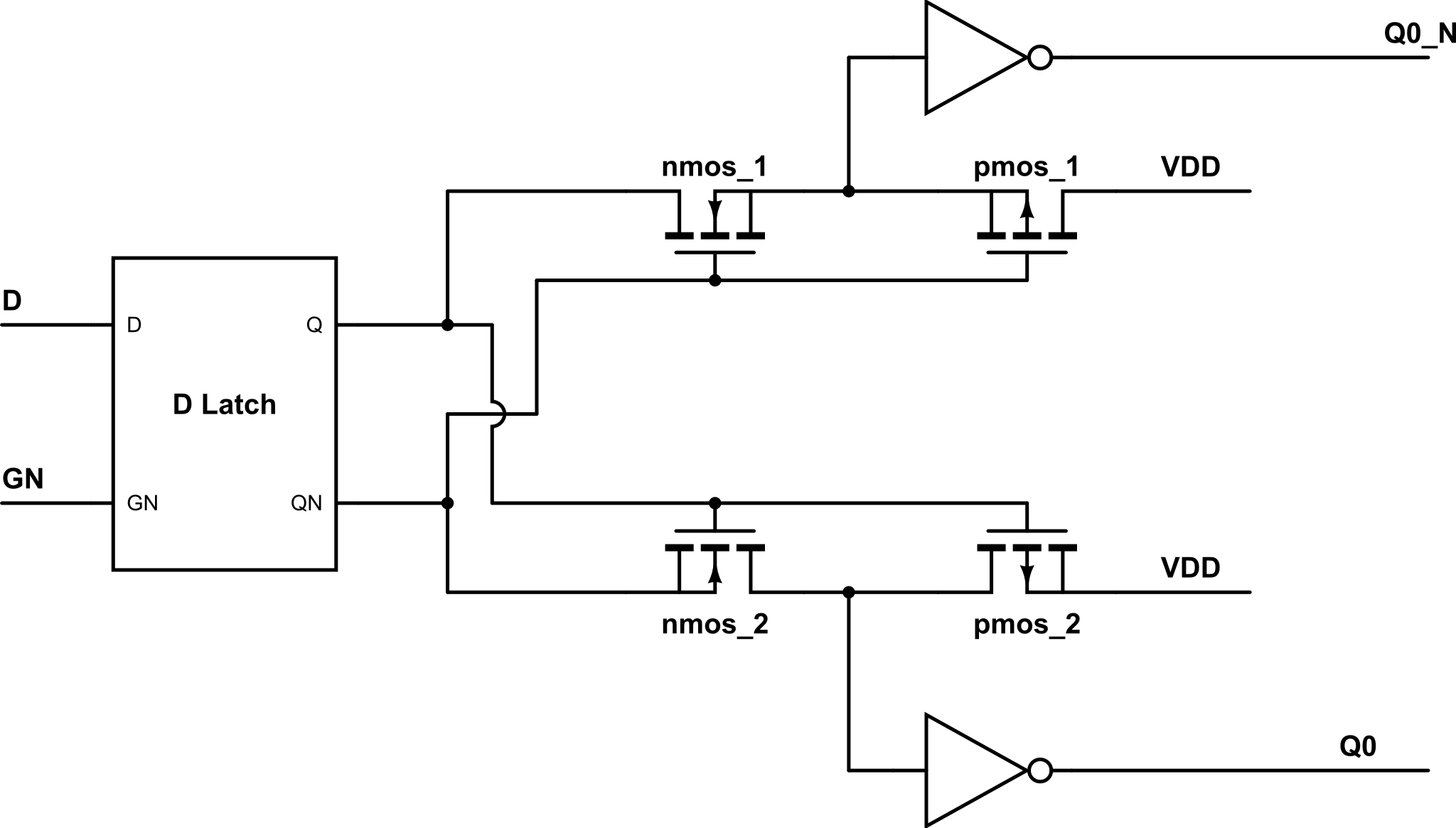}
    \caption{Masking latch. The latch provides a $0$-masking output, i.e., it implements an \signal{Mask-0\_Latch}}
    \label{fig:MC_Latch_0}
\end{figure}

\begin{figure}[t]
    \centering
    \includegraphics[width=0.45\textwidth]{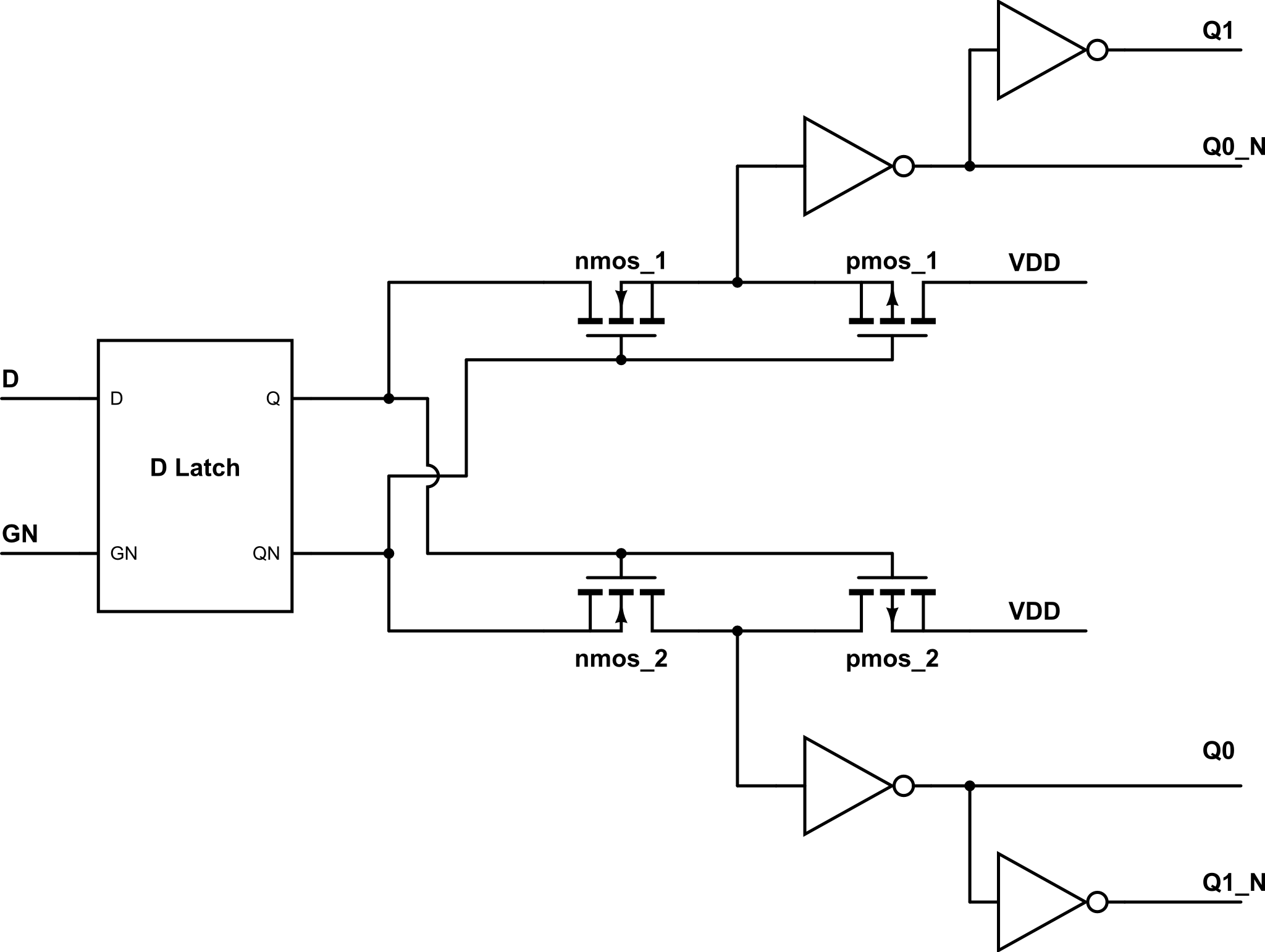}
    \caption{Masking latch that provides both a $0$- and a $1$-masking output, i.e., it implements both \signal{Mask-0 Latch} and \signal{Mask-1\_Latch}.}
    \label{fig:MC_Latch_01}
\end{figure}

A key component are the masking latches, which in our design perform four functions: 
they sample and store data, synchronize the sampled data signal
to the clock, and prevent any metastable voltages arising from 
the signal sampling to be delivered to their outputs, but instead present valid logic levels at the latch outputs.
The complete synchronization function takes place over cascaded
masking latch stages, each latch providing additional orders of
magnitude of reliability.
Note that although metastable voltages are prevented from 
propagating to these latch outputs, internal metastability
in the latches will still delay when the latch output changes 
from the masked logic level to the sampled logic level.
Thus, metastability can still be propagated
in the form of non-deterministic output transition delays.

Masking latches come in three forms: mask-0, where the output is 
held at logic-0 when the latch internally is at a metastable
voltage; mask-1, where the output is held at logic-1 when the 
latch is metastable; and mask-01, which has two outputs,
one of which is at logic-0 and the other at logic-1 when the 
latch is internally metastable.
Under the most common situation, when sampling the input data 
is unambiguous and any metastability is non-existent or extremely
short-lived, all three of these forms of latches operate 
deterministically and behave as standard latches with a 
predictable clock-to-q output delay. 
Only in rare cases will metastability occur resulting in 
prolonged clock-to-q output delays.

One way of implementing the 
masking function is to use shifted threshold inverters at the output, see e.g.~\cite{982426}. 
Since the internal storage loop of the latch is metastable at a different voltage,\footnote{Oscillatory metastability can be avoided by appropriate design, so that metastability results in a stable intermediate voltage until it resolves.} the output transition occurs only once the latch is recovering from internal
metastability.
This solution suffers from the drawback of the threshold voltage being sensitive to supply voltage variations 
as well as to transistor fabrication variations in both p and n-type transistors, whose characteristics are defined in different fabrication processing step.
This is in conflict with the goal of minimizing energy consumption and thus supply voltage, which created the need for an adaptive response to droops in the first place. 

Instead, we propose to use a differential output sensing circuit, see \Cref{fig:MC_Latch_01}, which produces an output transition only when the two nodes of the bistable circuit element in the latch have separated 
by a large enough voltage indicating that it is exiting from the metastable stage and resolving to a distinct logic level. Such a circuit can take 
advantage of using transistor gate-to-source threshold voltages, which are stable over a wide range of supply voltages.
Moreover, it relies on only one type of sensing transistor, limiting the relevant process variations to a more tightly controlled fabrication process step.

\subsubsection*{Circuit Design Approach}

Adding additional circuit elements to the nodes of the bistable 
circuit element in the latch exponentially degrades its metastability 
resolving capability. As explained in \cite{cox2015}, the transistor sizes in synchronizer and data latches are tuned very differently. 
Data latches are tuned to have a short clock-to-q delay, 
while synchronizer latches need to be tuned to speed up their 
ability to resolve from a metastable state to a defined logic value, minimizing the characteristic resolving time constant $\tau$. 
Small tau values can be achieved by making the bistable circuit element of the latch several times larger than in a typical data latch.
By starting with a latch tuned to be a synchronizer, the addition of 
either the differential output sensing circuit or the shifted threshold
inverters can replace the output inverter that the latch would otherwise have. 
This results in only a slight increase in $\tau$, 
and thus has only a minor impact on the delay caused by resolving metastability.
Metastability analysis simulations, using the technique described in \cite{molnar1992,yang2007a,yang2007b}, of a well-designed synchronizer latch and mask-01 version of this latch using the differential masking circuit shows a marginal increase of $\tau$, from 106 to 108\,ps, see \Cref{figure:Standard_latch_metastability_analysis_plots,figure:Mask_01_latch_metastability_analysis_plots}. 
Such values of $\tau$ are quite typical for good synchronizer latches in this generation of process technology\TODO{Do we need a citation for this?}.

Observe in \Cref{figure:Standard_latch_metastability_analysis_plots} that the synchronizer latch propagates metastable voltages at its output that lie between the logic-0 and logic-1 values, 
and when metastability resolves the output rapidly settles to a solid 
logic-0 and logic-1 value.
In contrast, in \Cref{figure:Mask_01_latch_metastability_analysis_plots} the mask-0 latch masks the internal mid-voltage values and outputs a constant logic-0 until any metastability
has resolved, as intended.

\TODO{Ian: Paragraph on MTBF when using the 4 delay element stages in our design. \textbf{Shreyas:} This doesn't serve any purpose since we can't get these stats in our experiments anyway. \textbf{Christoph:} The purpose is to convey in numbers that it's reliable, rather than letting the reader having to derive this on their own. We cannot assume that they are very familiar with this topic. Using simulation results in lieu of experimental data makes perfect sense if we can't have the latter.}

\begin{figure*}[t]
\includegraphics[width=0.95\textwidth]{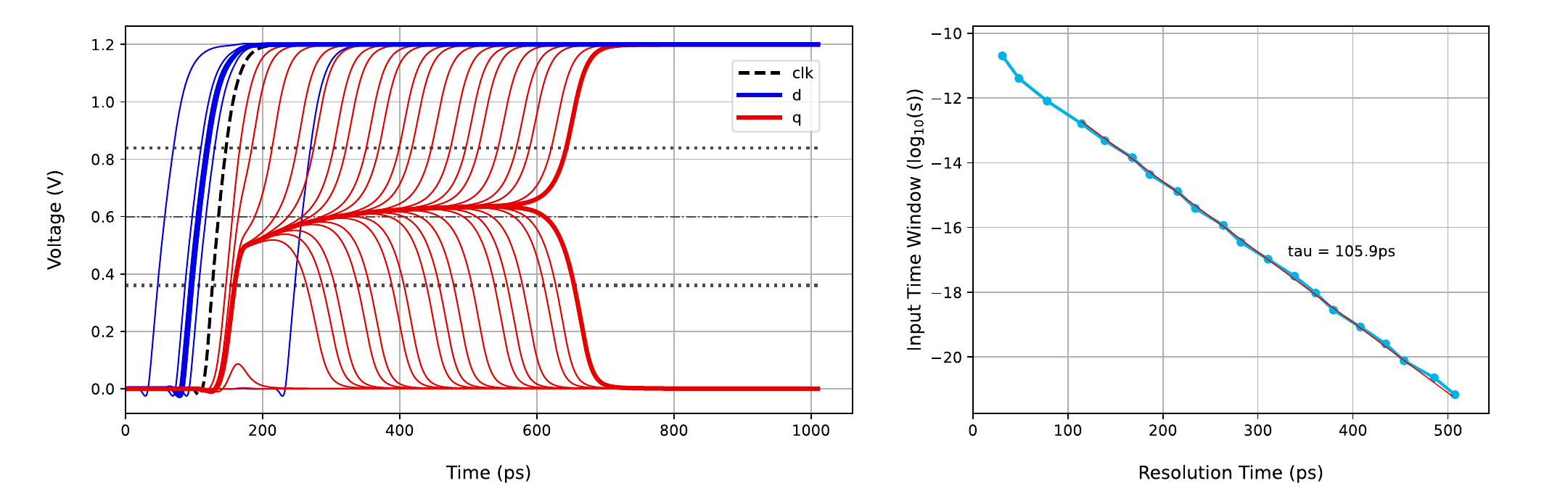}
    \caption{Synchronizer latch metastability analysis result plots.\label{figure:Standard_latch_metastability_analysis_plots}}
\end{figure*}

\begin{figure*}[t]
\includegraphics[width=0.95\textwidth]{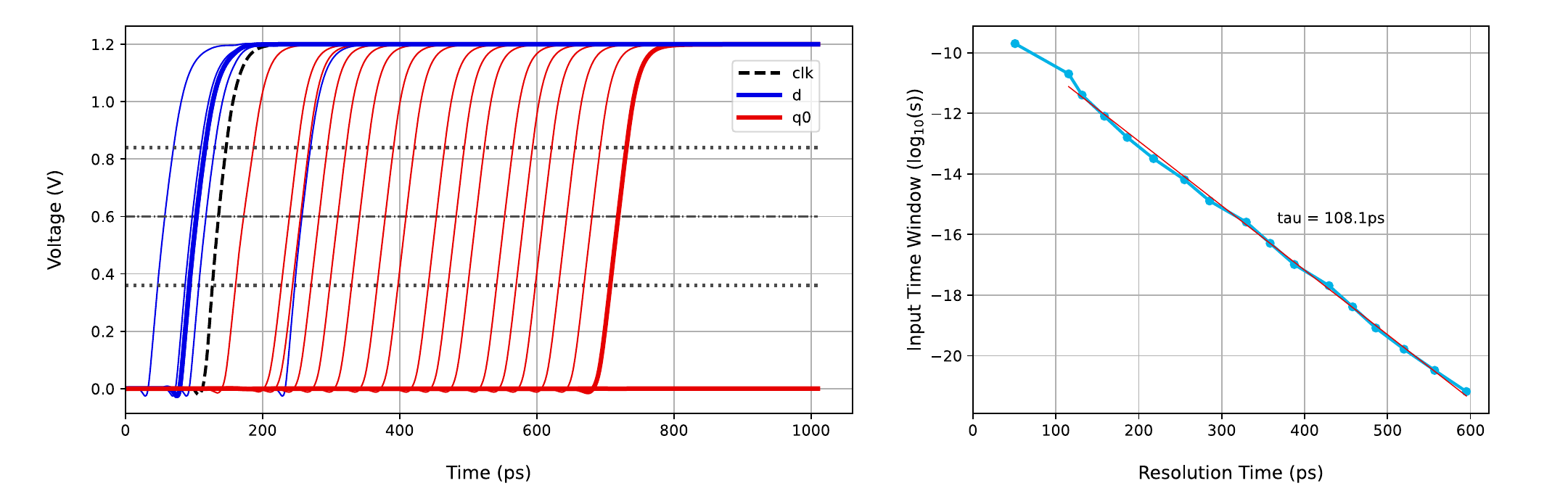}
    \caption{Mask-0 latch metastability analysis result plots.\label{figure:Mask_01_latch_metastability_analysis_plots}}
    \caption*{Each of these metastability plot results present a pair of graphs.
On the left is an overlay of multiple simulations presenting Voltage vs Time. 
As the latch is driven further into metastability its output
exhibits additional delay, in this case of several hundred pico-seconds.
The corresponding resolution time semi-log plot of these addition delay results is shown to the right, and from a linear fit to the data values the slope of the fitted line gives the resolution time constant value, $\tau$, for the circuit}
\end{figure*}

\subsubsection*{Containing Metastability}
In order to clarify the meaning of metastability containment, we provide a brief explanation of the challenge presented by metastability.
Recall that any bistable element, in addition to its two digital states, can enter a metastable state. 
When this happens in a standard latch, the analog output signal of the latch is stuck at a voltage which does not fall into either the \signal{logic 1} window or the \signal{logic 0} window in which its output can be safely treated as a digital 1 or 0 bit, respectively.
Metastability usually occurs in a latch or a flip flop when there is ambiguity in the voltage of the data input signal at the time that it is sampled. This can occur when the data input is an analog signal of arbitrary input value or when the data signal is transitioning at the time of sampling, i.e., setup time and hold time violations. Marino~\cite{marino1981general} examined this phenomenon mathematically and using the theory of continuous functions on connected topological spaces, showed that the possibility of metastability is unavoidable in bistable elements, i.e., no circuit can guarantee to remove metastability deterministically.
Further, the metastable output might settle arbitrarily to zero or one---potentially at a most inconvenient point in time. However, the result of Marino does not rule out the possibility of a device that has the following features:
\begin{itemize}
    \item In the absence of metastability inducing conditions, it behaves like an ordinary latch or flip-flop.
    \item If a setup/hold time violation or ambiguity in an analog input voltage occurs when the device is latched, there is at most one (possibly incomplete) output transition after which the output settles into a stable logic value, prior to the next time the device is latched.
    \item It can be specified whether the possible output during internal metastability of the latch is \signal{logic 0} or \signal{logic 1}, i.e., whether the latch or flip-flop masks internal metastability to $0$ or to $1$, respectively.
\end{itemize}
We emphasize that this behavior assumes a clean (i.e., glitch-free) digital signal at the clock input. Even so, the output transition that might follow a setup/hold time violation can occur arbitrarily late, just as there is no deterministic bound on when the output of a synchronizer stabilizes. Moreover, even if the data signal is glitch-free, it is not guaranteed that the output signal is glitch-free. After a setup/hold time violation, it is possible that a late output transition occurs just when the clock input is about to transition as well. This could result in a too quick transition in the other direction or ``choke off'' an output transition that has just begun; this is what the ``possibly incomplete'' in the above specification refers to.

\subsubsection*{Circuit Specification and Implementation}
The authors of~\cite{DroopJournal} proposed an abstract masking latch based design, arguing its feasibility based on the shifted threshold approach given in~\cite{982426}.
In contrast, we fully specify and implement superior masking latches based on differential sensing.
To this end, we adapt standard latch cells and make them metastability containing.

Our metastability-containing latches implement the above behavior by augmenting a traditional latch with a differential sensor. This design was originally proposed by Chuck Sheitz in the textbook of Mean and Lynn\cite[Chapter 7]{meadlynntextbook1980} in the context of designing Q-Flops for self-clocked circuits. In such circuits, a Q-Flop is part of the arbiter of the self-clocking machinery that decides when the next clock pulse must be generated. These generated clock pulses are in turn used to clock the Q-Flop. This helps them avoid data races by picking when they sample next. The metastability containing behavior there prolongs an existing clock signal until metastability is resolved and delays the generation of the next clock pulse. Because their clocks are internally generated, they can reason about their circuit entirely with no reference to device delays and in a self-contained manner.

Note that we use the circuit in a fundamentally different manner, as we use our latches in a setting where they have no control over the signal that clocks them. This places stronger design constraints on their usage and one needs to reason about their correctness conditioned on the properties of the input clock signal of the modules they are part of. Such reasoning will also have to consider device delays. We design two variants of the latch, one that defaults to \signal{logic 0} and one that defaults to \signal{logic 1} when metastability occurs, and use them correctly to produce good output signals.

\Cref{fig:MC_Latch_0} shows the schematics of our latch and \Cref{figure:Mask_01_latch_metastability_analysis_plots} its behavior under time-varying inputs. We first consider the specification of an \signal{M0\_latch}. Such a latch has ports similar to an ordinary D-latch, namely data input \signal{D}, negative enable input \signal{GN} and latch outputs \signal{Q0} and \signal{Q0\_N}. Internally, our latch uses a standard \signal{D-Latch}.
\begin{itemize}
    \item When the internal latch produces clearly digital output values at its output ports \signal{Q} and \signal{QN}, the output ports \signal{Q0} and \signal{Q0\_N} must produce voltages that are unambiguously in the ranges of the respective logic levels. Further the latch must satisfy the following equations for the logic levels of the output : $\signal{Q0} = \signal{Q}$ and $\signal{Q0\_N} = \signal{QN}$.
    \item When the internal D latch produces a metastable voltage level on its output ports \signal {Q} and \signal{QN}, both output ports \signal{Q0} and \signal{Q0\_N} must produce an output voltage that is unambiguously logic 0.
\end{itemize}
A further consideration is that this circuit must respond fast in order to provide meaningful metastability masking. Once we have an \signal{M0\_Latch}, we can obtain an \signal{M1\_Latch} by careful inversion of the output signals and selection of ports.

We now discuss the operation of our \signal{M0\_Latch} (see \Cref{fig:MC_Latch_0}) in the absence and presence of metastability. For the sake of notational convenience, we label the output signals of the internal latch as \signal{Q\_int} and \signal{QN\_int}. 
First, assume that all signals are settled to stable values.
In this case, either \signal{Q\_int\ =\ 1} and \signal{QN\_int\ =\ 0} or \signal{Q\_int\ =\ 0} and \signal{QN\_int\ =\ 1}. In this case, the differential amplifier can be viewed as being composed of two inverters numbered 1 and 2, whose ground inputs are respectively \signal{Q\_int} and \signal{QN\_int} and whose gate inputs are \signal{Q\_int} and \signal {QN\_int} respectively. Without loss of generality, we assume \signal{Q\_int = 1} and \signal{QN\_int = 0}. For inverter 1, the ground signal is equal to VDD, but \signal{nmos\_1} is turned off, while \signal{pmos\_1} is turned on. Thanks to the inverter, placed to give a sharp transition, \signal{Q0\_N = 0}. At the same time, for inverter 2, the ground signal is 0. Thus inverter 2 functions as an ordinary inverter gate with input $1$. Combined with the inverter at the output, we get \signal{Q0 = 1}.

This static analysis is inadequate to reason about the behavior of the differential sensor when the internal latch becomes metastable. Instead, we must consider the analog behavior of the differential sensor. For this discussion, we denote the voltages on \signal{Q\_int} and \signal{QN\_int} by $V_Q$ and $V_{QN}$ respectively. When the internal latch is metastable, we have that $V_Q \sim V_{QN} \sim $ VDD$/2$. The gate and source voltages of both nmos transistors \signal{nmos\_1} and \signal{nmos\_2} are identical, making them completely opaque. On the other hand, there is a significant voltage gap between the source and gate voltages of the pmos transistors, which provides pull up on the drain terminals of both \signal{pmos\_1} and \signal{pmos\_2}. Thanks to the inverters, both \signal{Q0} and \signal{Q0\_N} thus output 0.

This is to be contrasted with the situation in which the internal latch is in a stable state:
for instance, when \signal{Q} is at logic 1 and \signal{QN} at logic 0, then nothing changes in the behavior of \signal{nmos\_1} and \signal{pmos\_1}, but \signal{nmos\_1} is opaque while \signal{pmos\_1} has a stronger pull-up effect on its drain terminal due to the larger difference between its source and gate voltage. However, \signal{pmos\_2} is now opaque while \signal{nmos\_2} is transparent. Hence, there is a strong pull-down at the drain of \signal{nmos\_2}, which causes \signal{Q0} to be at a voltage equivalent to logic 1. Thus, digital latch behavior is achieved whenever one of the \signal{nmos} latches is conducting and exerts a strong pull down effect on its drain. 

To meet the fast response time requirement under metastability, it is important to consider the capacitance load on the outputs of the internal latch as well as at the output taps of the differential sensor. The effect can be exacerbated by the smaller differential in the smaller pull-up effect at the \signal{pmos transistors} when the internal latch is metastable. 
\begin{itemize}
    \item If the capacitive load produced by the transistors is large, the response of the pmos transistors to the gate-source voltage difference is exponentially slower.
    \item  Likewise, a large inverter at the output of the differential sensor also slows down the response time. 
\end{itemize}
Thus, it is imperative to use the smallest possible transistors for the differential sensor and choose the smallest inverters possible for their outputs.

\subsection{Phase Accumulator}
\begin{figure*}[t!]
    \centering
    \includegraphics[width=0.7\textwidth]{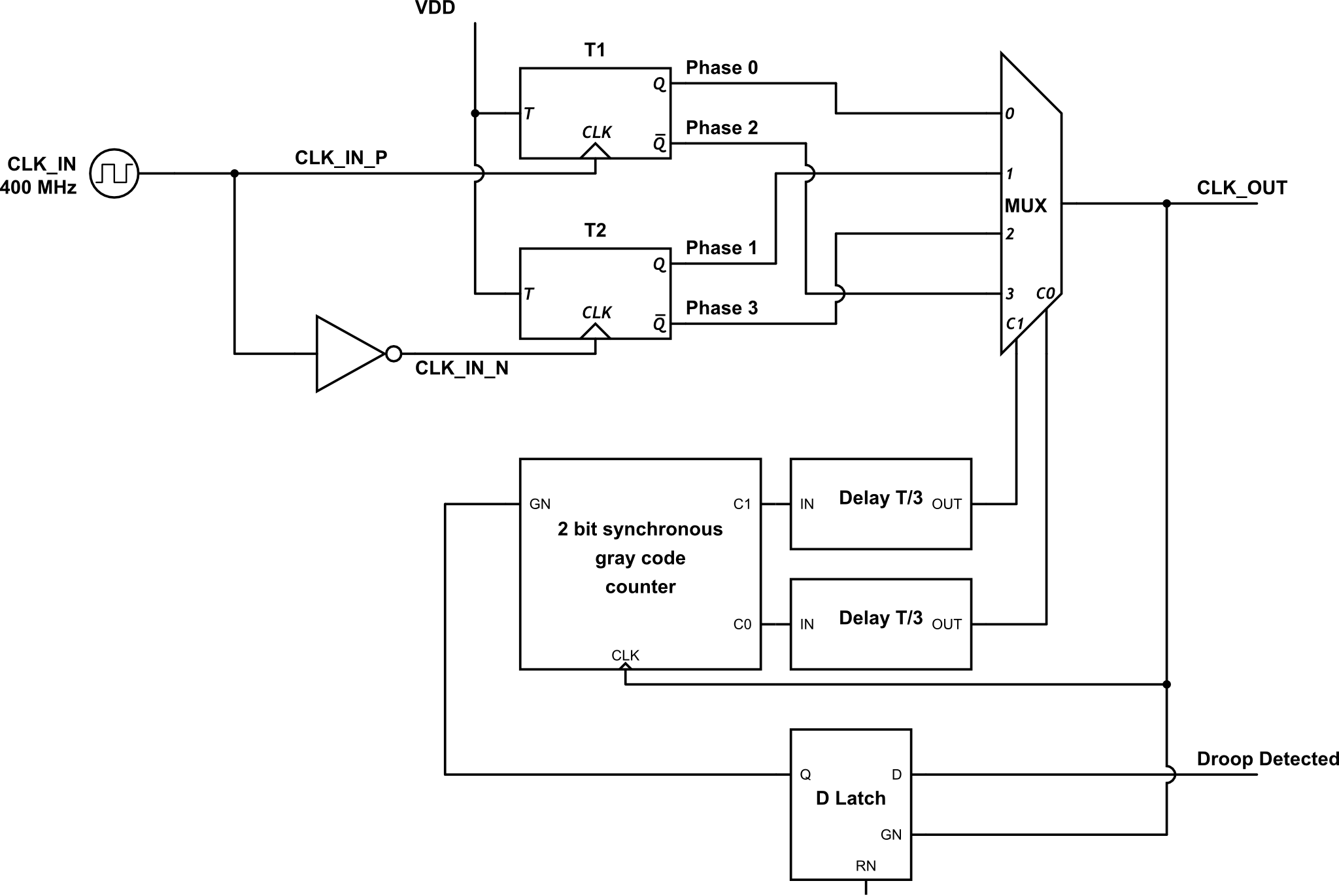}
    \caption{The Phase Accumulator: Generating Four Clock Phases\label{fig:Phase_Accumulator_four_phases}}
\end{figure*}   

In the absence of a droop, the phase accumulator simply divides down the input clock by a factor of $2$. However, when it samples \signal{G\_IN} to be high, it ``drops'' half a cycle of the input clock, i.e., leaves \signal{CLK\_OUT} low for additional $T/4$ time.

The main challenge here is that $T$ is gauged to match the standard clock speed for the used technology, leaving too little time even for extremely simple calculations based on the input clock. We resolve this by first deriving $4$ clocks of period $T$ and relative phase shifts of $0$, $T/4$, $T/2$, and $3T/4$ from the input clock. A multiplexer is used to select the output matching the current phase shift, and the respective state is held by a modulo $4$ counter. We exploit that the decision to switch to a different input of the multiplexer, i.e., whether the counter was increased or not, becomes only relevant $T$ time after the rising clock flank of the currently used of the four divided clocks. This leaves sufficient time to increase the counter and stabilize its output in time. What remains is to take care that no output glitches arise from switching to the next input of the multiplexer. This is achieved by a suitable counter design, pre-computing the counter output, and properly timing the latching of the flip-flop capturing this output.\footnote{Technically, this also requires that the multiplexer is glitch-free when switching between different input signals with the same value. This is satisfied by standard designs, however, which we verified in simulation for the relevant technology.}

Note that the phase accumulator would already solve the task at hand in the absence of setup/hold time violations of its \signal{G\_IN} input. The purpose of the chain of delay elements is to guarantee this up to a negligible probability of failure, without increasing the response time of the circuit by the time used for synchronization.

A key difference to the design from~\cite{DroopJournal} is that we assume an input clock which is twice as fast, so that we can use T-Flip Flops instead of internal delays to produce clocks with phase offsets. This carries the advantage of not requiring a PLL to stabilize these delays against process and other variations. In the design from~\cite{DroopJournal}, failing to do so would result in accumulating these variations over all phase shifts, i.e., up to a total delay of $3T/4$, before switching back to a phase offset of $0$. This final step, corresponding to a phase shift of $T/4$, then would be subject to the full accumulated error, including the possibility of a drastically shortened high time of the clock output signal.

A few details on our implementation, shown in~\Cref{fig:Phase_Accumulator_four_phases}:
\begin{itemize}
  \item We use T Flip Flops labelled \signal{T1} and \signal{T2}. The input signal \signal{CLK\_IN} is used to produce two copies, which we label \signal{CLK\_IN\_P} and \signal{CLK\_IN\_N}, such that $\signal{CLK\_IN\_P} = \signal{CLK\_IN}$ and $\signal{CLK\_IN\_P} = \signal{NOT}\  \signal{CLK\_IN}$, respectively, are fed into the \emph{clock} inputs of \signal{T1} and \signal{T2}. Their \emph{toggle} inputs are tied to \signal{VDD}. Thus, the outputs \signal{Q} of \signal{T1} and \signal{T2} alternate every time there is a rising clock edge on \signal{CLK\_IN\_P} and \signal{CLK\_IN\_N}, respectively. This yields clocks which are at phase offsets $0$ and $\pi/2$. The other two phase offsets are obtained by the negation of these two clocks. This solution has the benefit of reliably producing evenly spaced phase offsets. Additionally, there is no need to redesign this part of the circuit for a different clock period $T$, unlike one based on buffer cells, whose delays are static w.r.t.\ the clock signals.
  \item The second consideration is about the time at which the effect of activation of the \signal{droop\_detected} signal causes the multiplexer to select a different clock with a phase shift. In particular, we wish to ensure the following:
  \begin{itemize}
    \item The \signal{Droop Detected} signal is received from the delay element to the right, which is clocked by the \signal{CLK\_OUT} output of the phase accumulator. To correctly time the latching of the next data value from the synchronizer pipeline the chain of delay elements implements, both the latch and the counter are clocked by the \signal{CLK\_OUT} signal as well.
    \item Recall that an up-count of the counter changes the control input of the multiplexer such that we switch to the multiplexer input that has its rising and falling flanks roughly $T/4$ time later than the preceding one. Thus, applying the change immediately would result in the multiplexer output transitioning to low and then high again, i.e., cause a glitch in the output clock. We prevent this by delaying the output signal of the counter by roughly $T/3$; this is larger than $T/4$, so no glitch is introduced due to switching too early, and smaller than $T/2$, so no glitch is introduced by failing to switch before the next clock flank of the ``old'' select input used. Choosing a delay value bounded away from both $T/4$ and $T/2$ guards against PVT variations.
    \item We require that the high time of the signal is maintained even when delaying the next rising clock edge at the output by $T/4$. Accordingly, we seek to switch to the multiplexer input with $T/4$ larger phase offset after the falling clock edge. We do so by using a counter which samples its input and performs an up-count on the falling edge of its clock input. This also leaves sufficient time for the counter to take note of the value the latch captured\footnote{The latch becomes transparent again on the falling edge, but it is easy to ensure that the hold time of the counter is smaller than the propagation delay through the latch.} on the rising edge of \signal{CLK\_OUT}.
    \item Lastly, to avoid near-simultaneous transitions at the selection inputs of the multiplexer, which can also lead to a glitch, we use a Gray code counter and ensure that in each step, only one bit of the counter's output changes.
  \end{itemize}
  \item Our latches and droop detected signals are both active low. Further, the reset output on all latches is also low. This creates a situation where the output of the latch that latches \signal{droop\_detected} is identical when a droop is asserted and when the latch is reset. Thus, on reset the data inputs to the latches match the reset value and no metastability is induced by an ill-timed de-assertion of the reset signal. Therefore, we latch the reset signal with \signal{CLK\_OUT} before passing it into the phase accumulator. Note that this implies that the first rising clock flank traveling through the chain of delay elements will be delayed at every element. However, the phase accumulator will apply matching phase shifts to all subsequent clock flanks, so this is of no consequence to the circuit's operation.
\end{itemize}
\begin{theorem}\label{thm:pa}
Suppose that dynamic changes in delays between consecutive clock cycles are negligible, an inverter delay is sufficiently small, and that delay variations are sufficiently small. Moreover, assume that on an up-count, the counter only changes the output bit that transitions, and the multiplexer has no output glitch when the control signal indicates a change from one input that is high to another that is high. Then the circuit given in \Cref{fig:Phase_Accumulator_four_phases} correctly implements a phase accumulator in the sense of~\cite{DroopJournal}.
\end{theorem}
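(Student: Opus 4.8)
The plan is to decompose correctness into three largely independent claims---correct generation of the four phase-shifted clocks, correct functional behavior of the Gray-code counter together with the multiplexer, and glitch-freedom of the switched output---and then to combine them by an induction over clock cycles, where the assumption that dynamic delay changes between consecutive cycles are negligible guarantees that the timing relations established in one cycle persist in the next.

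First I would establish the phase-generation layer. Since \signal{CLK\_IN} has period $T/2$ and the toggle inputs of \signal{T1} and \signal{T2} are tied to \signal{VDD}, the output of \signal{T1} (clocked by \signal{CLK\_IN\_P}) is a period-$T$ square wave toggling on each rising edge of \signal{CLK\_IN}, while the output of \signal{T2} (clocked by \signal{CLK\_IN\_N}) toggles on the falling edges of \signal{CLK\_IN}, which trail the rising edges by $T/4$. Hence the two outputs are period-$T$ clocks at relative phase offset $T/4$, and their inversions supply the offsets $T/2$ and $3T/4$; here the assumption of a sufficiently small inverter delay ensures these two derived offsets are accurate, and the assumption of small delay variations ensures all four offsets hold up to negligible error. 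Because the offsets arise from frequency division rather than static buffer delays, they scale with $T$ and are robust to PVT drift.

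Next I would verify the selection layer by induction on the clock cycle. The modulo-$4$ counter indexes which of the four clocks the multiplexer forwards to \signal{CLK\_OUT}; I would argue that when \signal{G\_IN} is sampled low the counter holds its state, so \signal{CLK\_OUT} is simply \signal{CLK\_IN} divided by two, and that when \signal{G\_IN} is sampled high the counter up-counts, advancing the selected offset by $T/4$ and thereby delaying the next rising edge of \signal{CLK\_OUT} by exactly $T/4$, which is the specified behavior. Timing feasibility follows from the fact that the up-count decision becomes relevant only $T$ after the rising flank of the currently selected clock and that the counter up-counts on the \emph{falling} edge, leaving ample time for its output to settle before the $\approx T/3$-delayed select lines reach the multiplexer. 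Reset handling is a short addendum: since the latched reset value coincides with the droop-asserted data value, de-assertion of the reset induces no metastability, and latching it with \signal{CLK\_OUT} only inserts a uniform initial delay that the accumulator reproduces on every subsequent flank.

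The hard part, carrying the real weight of the argument, is glitch-freedom of the switched output. Here I would show that the $\approx T/3$ delay on the counter output places every multiplexer switch inside a safe window. The switch is triggered at a falling edge of \signal{CLK\_OUT}, at which moment the departing clock has just gone low while the arriving clock (offset by $T/4$) is still high; a delay exceeding $T/4$ ensures the switch occurs only after the arriving clock has also gone low, so no spurious low-then-high notch is produced, while a delay below $T/2$ ensures the switch completes before the departing clock would emit its next rising edge, preventing an unwanted early rise followed by a notch. The stated smallness of delay variations keeps the switch within $(T/4,T/2)$ despite PVT. The Gray-code property---only the transitioning bit changes on an up-count---guarantees that the multiplexer never transits an intermediate select address, and the remaining case, in which the two clocks involved are simultaneously high at the switching instant, is discharged by the assumption that the multiplexer is glitch-free on high-to-high address changes. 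Combining the three claims over all cycles yields that \signal{CLK\_OUT} is a glitch-free clock of default period $T$ whose high time is preserved and whose rising edges are delayed by $T/4$ exactly when \signal{G\_IN} is sampled high, which is the phase-accumulator specification of~\cite{DroopJournal}.
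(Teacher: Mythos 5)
Your proof follows essentially the same route as the paper's: the same decomposition into T-flip-flop phase generation, a cycle-by-cycle induction whose base case is the reset argument, and glitch-freedom obtained by placing the delayed select change in the window $(T/4,T/2)$ after the falling output flank so that both the departing and the arriving clock are low at the switch, with the Gray code ensuring only one select bit transitions. The only discrepancies are cosmetic rather than substantive: in the implementation \signal{droop\_detected} is active low, so it is a sampled $0$ (not $1$) that triggers the up-count, and since the switch always lands where both clocks are low, the multiplexer property actually needed is glitch-freedom on a same-value (low-to-low) select change, not the high-to-high ``remaining case'' you invoke, which in fact never occurs under the established timing.
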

\begin{proof}[Proof Sketch]
Granted that the input clock signal has high and low time $T/2$ and there is no setup/hold time violation on the \signal{Droop Detected} input, we need to show that the output signal is glitch-free with high time $(1\pm \varepsilon)T/2$ and low time $(1\pm \varepsilon)(T/2+b\cdot T/4)$ on each cycle, where $b$ is the bit sampled by the D-latch in the respective cycle.

First, as discussed above, the multiplexer receives derived clocks of frequency $200$\,MHz with phase shifts of roughly $0$, $T/4$, $T/2$, and $3T/4$ relative to the input clock. We now show the claim by induction on the cycle number. As explained above, the reset to low asserts the same value as the \signal{Droop Detected} signal provides on boot-up, so regardless of the precise timing of the de-assertion of the reset signal, the D-latch will hold a stable $0$ until the D-latch becomes transparent again on the first falling (output) clock flank.

Now assume that for $i\ge 1$ we have shown correct behavior up to the $i$-th rising output clock flank. In particular, on the $i$-th rising output clock flank the D-latch sampled either a stable $0$ or $1$. If a $1$ was sampled, the counter output does not change when it is latched on the falling clock flank. Hence, the select input to the multiplexer does not change, and the $i$-th falling and $(i+1)$-th rising clock flank arriving at the currently selected multiplexer input travel through unimpeded. At this time, the D-latch captures the next data value on the \signal{Droop Detected} input, which the phase accumulator specification assumes to be stable by this point.\footnote{To compose the phase accumulator with a chain of delay elements, one needs to check the respective timing as well. However, delay elements clock their latches with the input clock and maintain the value they latched in the preceding cycle until the falling flank of the current cycle. Accordingly, the timing constraints resulting from connecting a delay element's \signal{CLK\_IN} and \signal{E\_OUT} to the phase accumulator's \signal{CLK\_OUT} and \signal{Droop Detected}, respectively, are virtually impossible to violate.} 

Now consider the case that a $0$ was sampled on the $i$-th rising output clock flank. The counter then performs an up-count on the $i$-th falling output clock flank, which affects the control input of the multiplexer more than $T/4$ time later, but less than $T/2$ after the $i$-th rising clock flank at the output (here we use that delay variations are small enough). Thus, the change occurs when the previously selected multiplexer input is low. As the one that is selected now has (up to an inverter delay and delay variations) a phase offset of additive $T/4$, it is also low at this point in time. Note that this also applies when the counter overflows from $3$ to $0$, as the period of each of the four clock signals fed into the multiplexer is $T$. Because we use a Gray code counter, only one output bit is affected by the up-count. By assumption, this means that only one of the control inputs to the multiplexer is affected. Again by assumption, this implies that the multiplexer output does not glitch on the transition. Finally, since the transition is made after the $i$-th falling flank of the previously selected multiplexer input, the $i$-th high time of the signal is (up to delay variations) $T/2$, and the subsequent low time is increased by roughly $T/4$.
\end{proof}

\subsection{Delay Element}
\begin{figure*}[t!]
        \centering
        \includegraphics[width=0.95\textwidth]{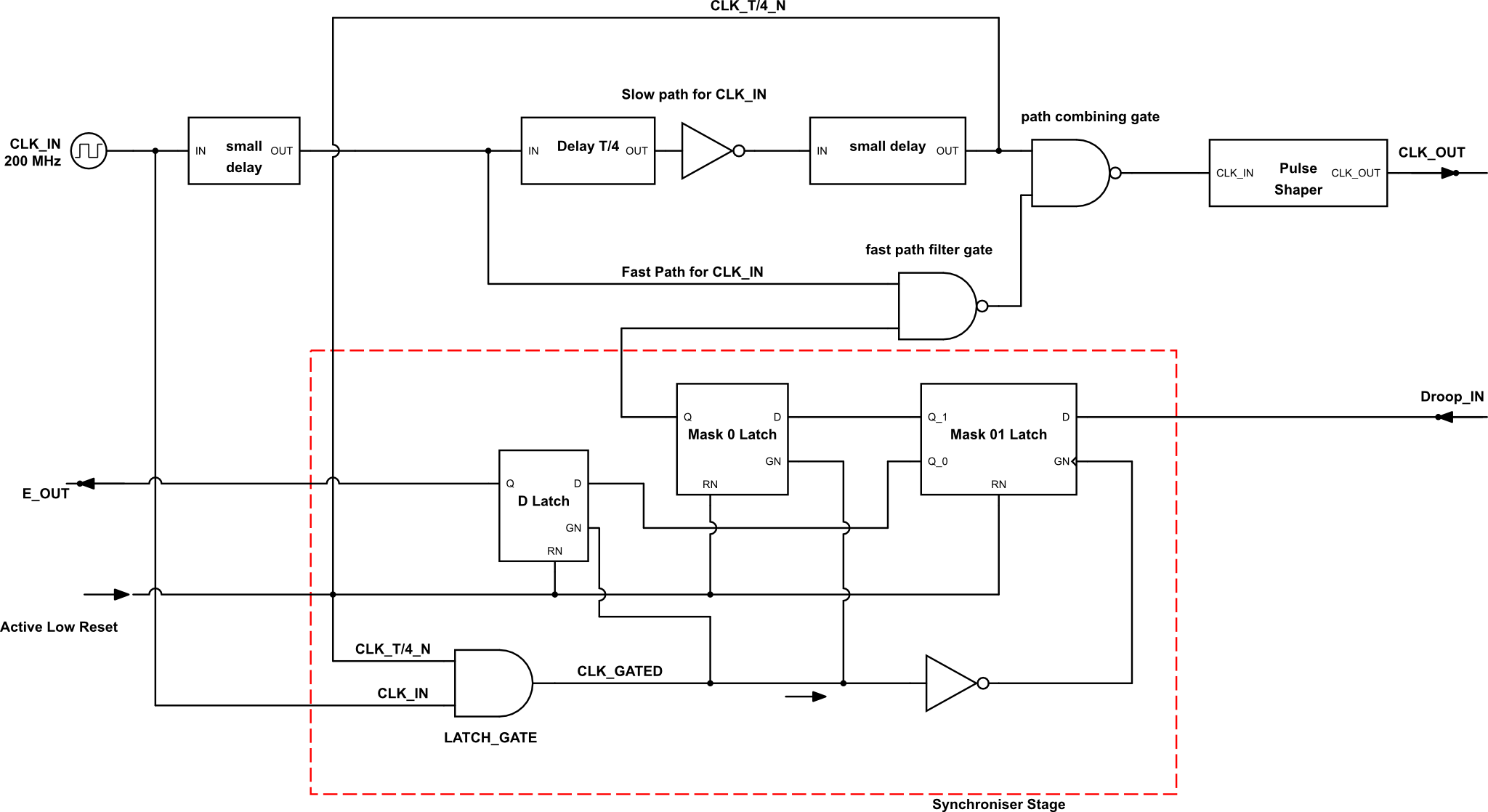}
        \caption{The schematic of our delay element.\label{fig:Delay_Element_Internals}}
\end{figure*}

The delay element is the main innovation of~\cite{DroopJournal}. We need to provide an implementation that uses our masking latch design for the IHP $130$\,nm technology we presented. Further, in order to work without PLLs, our implementation needs to be resilient to delay variations in the constituent cells.

Our implementation is shown in~\Cref{fig:Delay_Element_Internals}. The lower part of the module, from the $01$-masking latch on and below (the part encircled by the red box), implements a synchronizer stage. It receives the \signal{Droop\_IN} signal from either the droop detector or the delay element to the right. The $01$-masking latch acts as the first-stage latch of a synchronizer flip flop and the D-latch as the second-stage latch. There are two key differences in usage to a standard synchronizer chain:
\begin{itemize}
  \item We use the data values propagating through the synchronizer chain to control whether the delay element adds delay or not. This is done via another second-stage latch, \signal{MC 0 latch}, providing one input to a \signal{NAND} gate labelled the \signal{fast path filter gate}. To ensure correctness in face of metastability due to a setup/hold time violation by the \signal{Droop\_IN} signal, this extra second-stage latch and the first-stage latch must have certain masking properties. We stress that we are not implementing masking flip-flops here, but combine masking latches in a very specific way tailored to our application.
  \item In order to align the timing of latching data with its usage by the top part of the module, which is responsible for clock propagation, the latches are clocked by the input clock of the module. Because we are chaining such elements together, the propagation time of the clock through the module needs to be taken into account. In essence, this delay is deducted from the time available in each synchronizer stage, necessitating more careful timing. Note that this also means that the chain has to be slightly longer than a standard synchronizer chain to ensure equally low probability of failure.
\end{itemize}

Let us discuss the operation of the delay element in more detail. A cycle begins with the rising clock flank at the clock input \signal{CLK\_IN}. At this time, no signal transitions are propagating through any of the delay lines in the module with the possible exception of the pulse shaper module, which we will discuss separately. Thus, the \signal{CLK\_T/4\_N} input to the \signal{LATCH\_GATE} gate, at the bottom left corner, has settled to \signal{logic 1}, while the bottom input \signal{CLK\_IN} transitions from \signal{logic 0} to \signal{logic 1}. In other words, the input clock flank passes through the \signal{LATCH\_GATE} gate to the signal \signal{CLK\_GATED} and latches the second-stage latches \Latch{} and \MCZeroLatch, while the first-stage latch \MCZeroOneLatch{} becomes transparent. Due to the inversion of its clock input and the propagation delay through the first-stage latch, the second-stages latches are allowed to properly catch the output of the first-stage latch before it becomes transparent, and then turned opaque. The first-stage latch samples at the falling edge of \signal{CLK\_T/4\_N} and remains opaque for about the next $3T/4$ time units.

Concurrently, the rising input clock flank of \CLKIn{} propagates through the \signal{slow path} indicated in the schematics, which is gauged to guarantee that the \signal{path combining gate}, a \signal{NAND} gate, does not observe the corresponding input transition before its other input had time to settle to the value the $0$-masking latch captured. Observe that the \signal{fast path filter NAND} gate lets the clock flank pass if the latch holds a $1$ and blocks its propagation if it holds a $0$; we will discuss the case of metastability later. Note that the \signal{path combining AND} gate receives the inverted and delayed input clock signal \signal{CLK\_T/4\_N} as top input. Given that the high time of the input signal is at least (roughly) $T/2$, this entails that the second-stage latches become transparent only when the second \signal{fast path filter NAND} gate receives an input of $0$ on its top input. This implies that the $0$-masking latch capturing a $1$ results in a fast propagation of the clock flank, with the second \signal{NAND} gate constantly receiving at least one input of $0$ after arrival of the clock flank until the falling input clock flank has passed through the delay on the top path and been inverted. At this point, the first \signal{NAND} gate already had a stable $0$ at its top input for about $T/4$ time, i.e., the falling clock flank travels unimpeded through the slow top path. On the other hand, the $0$-masking latch holding a value of $0$ closes off the bottom, fast path for the clock signal, meaning that it is simply delayed by $T/4$ (plus the small initial delay that always applies). Either way, the low time of the input clock signal of at least (roughly) $T/4$ is sufficiently long for the delay lines to settle to their original state before the next rising flank arrives.

Observe that the fast propagation resulting from latching a $1$ results in the high time of the output of the second \signal{NAND} gate being extended by roughly $T/4$ relative to the input clock signal. This, in turn, reduces the low time accordingly. To be able to chain multiple delay elements together, the pulse shaper reduces the high time of the signal again to roughly $T/4$, provided neither the high nor the low time of the signal it receives as input are too small.

Finally, observe that the clock inputs to all latches have settled to the values they had at the beginning of the cycle again in time for the next rising input clock flank. Hence, it remains to check that data is propagated correctly when chaining multiple delay elements together. To see this, recall that the master latch is latched when the rising clock flank has propagated through and been inverted by the $T/4$ delay, gone through the \signal{AND} gate, and been inverted again, i.e., slightly more than $T/4$ time after the rising clock flank arrived at the input to the module. Marginally earlier, the D-latch serving as the slave on the data path is becoming transparent; between the time the master latched in the previous cycle (plus propagation to its output and the D-latch) and then, \signal{E\_OUT} continuously has the value latched by the master in the previous cycle. Thus, the master latches the correct value if the latching time lies in the interval spanned by the element to the right (if any) latching its master (plus some small propagation time) due to the previous rising clock flank and making its D-latch transparent again due to the current. We certainly do not latch too late, since the difference of about one inverter delay in the relative latching times of master and slave of the same cell is dwarfed by the time for the rising clock flank to travel from one element to the next. We do not latch too early, since the rising clock flank is, in sum, experiencing a delay substantially below $T$, even on the slow propagation path.\footnote{So far, our discussion merely established that the delay of the rising clock flank is at most roughly $T/4$ plus a couple of small delays and the delay through the pulse shaper. The delay through our pulse shaper implementation will be shown to be roughly $T/10$ plus a few gate delays.}

Overall, each delay element adds a delay of $T/4$ when propagating the clock signal if it latches a stable $1$, and does not add additional delay when latching a stable $0$. Since stable values are handed to the left deterministically, the same conditional delay is applied by the elements to the left and, ultimately, the phase accumulator. Thus, if a clock flank is delayed, the same delay is applied to all subsequent clock flanks.

\paragraph*{Handling Metastability}
As the rightmost delay element samples the \signal{Droop\_In} signal from the droop detector, which might not provide a stable high or low output voltage within the setup/hold time window of the master latch, metastability can be induced. Despite masking behavior of this latch, a late output transition can result in metastability of a slave latch, and this could be propagated to the element to the left. For a sufficiently long delay line, metastability is guaranteed to decay before reaching the phase accumulator with all but negligible probability. However, our design needs to prevent glitches in the clock signal and proper timing of its transitions despite internal metastability of some of the latches in the delay elements.

Accordingly, assume that the master latch becomes internally metastable upon latching a value. By the properties established in \Cref{subsec:masking}, for $b\in \{0,1\}$, until it becomes transparent again, its $Q_b$ output will be stable $b$ with the possibility of a late transition to $1-b$. In particular, only one of its two outputs may undergo a late transition, implying that only one of the slave latches can become internally metastable upon latching. Note that the case that neither becomes metastable is not equivalent to the master latching a stable value, as the slaves then disagree on the sampled value. However, this case is subsumed by the other two, namely one of the slaves becoming metastable, so we skip it in the discussion.

First, consider the scenario in which the $0$-masking latch becomes metastable. If it stabilizes to $0$ or after the rising input clock flank propagates through the $T/4$ delay, is inverted, and reaches the second \signal{NAND} gate, the behavior is identical to the latch capturing a stable $0$. Otherwise, its output has a late $0$-$1$-transition, resulting in a delayed output transition of the first \signal{NAND} gate relative to the case where the latch captured a stable $1$. Thus, the second \signal{NAND} gate has its rising output transition at some point in time from the interval spanned by the timing of this transition in the fast and slow propagation case, i.e., latching a stable $1$ or $0$, respectively. In particular, the timing envelope we established for this transition in the metastability-free case still applies, there is no glitch in the propagated clock signal, and the pulse shaper ensures that the delay element's output signal transitions low about $T/2$ time after going high. Moreover, the D-latch captured a stable $0$, which is propagated to the upstream elements in the chain and ultimately the phase accumulator. Thus, a delay of $T/4$ is applied to all subsequent clock pulses. Overall, proper timing is guaranteed, although it may happen that only a ``fractional delay'' of $x\in [0,T/4]$ is applied to the current clock pulse. 

Next, suppose that the D-latch becomes metastable. This entails that the $0$-masking latch captures a stable $1$, i.e., no additional delay is applied by the current element. It is now possible that the master latch of the element to the left becomes metastable upon latching on the next falling clock flank traveling through. Again, we need to consider both cases for this delay element. First, if the $0$-masking latch captures a stable $1$, this is equivalent to fast propagation of the clock signal. By induction from right to left in the chain, this case applies to all delay elements until the one where metastability gets resolved. At this element, where the second case applies, a delay of $x\in [0,T/4]$ is added. Subsequently, all clock pulses are delayed by $T/4$. This is the crucial property causing us to chain together masking latches exactly in the way done here: it must never happen that a clock pulse is delayed, but subsequent pulses are not.

\paragraph*{Showing Correctness}
In~\cite{DroopJournal}, correctness of the overall circuit is proved based on the correctness of the submodules. Accordingly, it is sufficient to show the required properties of the delay element. Under the assumption that the input signal to each delay element is glitch-free, has high time of $(1\pm\varepsilon)T/2$ (for a sufficiently small constant $\varepsilon>0$), has low time of at least $(1-\varepsilon)T/2$, and has consecutive rising flanks separated by $T$ time, these are the following.
\begin{enumerate}
  \item A chain of delay elements acts as a (synchronous) data pipeline, where values are handed to the element on the left on each clock pulse.
  \item The falling clock flank at \signal{CLK\_OUT} follows always $(1\pm \varepsilon)T/2$ time after the rising clock flank.
  \item If a stable value of $1$ is sampled on a rising clock flank, this clock flank is forwarded to \signal{CLK\_OUT} with some fixed delay $\delta$.
  \item If a stable value of $0$ is sampled on a rising clock flank, this clock flank is forwarded to \signal{CLK\_OUT} with delay $\delta+(1\pm \varepsilon)T/4$.
  \item If no stable value is sampled on a rising clock flank, one of two cases must apply:
  \begin{enumerate}
    \item A delay of $\delta + x$ for $x\in [0,(1+\varepsilon)T/4]$ is applied and \signal{E\_OUT} behaves as if a stable $0$ was sampled.
    \item Delay $\delta$ is applied.
  \end{enumerate}
\end{enumerate}
The above claims follow from the reasoning sketched above and verifying the timing of the circuit after layout for the considered technology, provided that the pulse shaper implementation meets its specification. Provided that, for a sufficiently small $\varepsilon>0$, the pulse shaper is presented with a glitch-free signal of high time at least $(1-\varepsilon)T/2$ and low time at least $(1-3\varepsilon)T/4$, its output must satisfy the following conditions.
\begin{itemize}
  \item It reproduces each rising flank at the input with a fixed delay at the output.
  \item For each falling flank at the output, it produces a falling flank $(1\pm \varepsilon)T/2$ time after the preceding rising flank at the output.
\end{itemize}
\begin{theorem}
Suppose that dynamic changes in delays between consecutive clock cycles are negligible and the pulse shaper implementation satisfies the conditions stated above. Then the circuit given in \Cref{fig:Delay_Element_Internals} correctly implements a delay element in the sense of~\cite{DroopJournal}.
\end{theorem}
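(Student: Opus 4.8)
The plan is to verify the five enumerated properties of the delay element, treating the pulse shaper conditions as given by hypothesis and reducing everything to a timing analysis of a single clock cycle together with an inductive argument along the chain. I would first fix a cycle and trace the rising input clock flank through the two paths identified in \Cref{fig:Delay_Element_Internals}: the fast path through the fast path filter \signal{NAND} gate gated by the \MCZeroLatch{}, and the slow path through the $T/4$ delay and inverter feeding the \signal{path combining} gate. The key preliminary observation is the latching schedule: the rising \CLKIn{} flank passes through \signal{LATCH\_GATE} to \CLKGated{}, latching the two second-stage latches (the \Latch{} and \MCZeroLatch{}) and making the first-stage \MCZeroOneLatch{} transparent; because the slave clock is inverted relative to the master and the master's propagation delay is positive, the slaves reliably capture the master's output before the master becomes transparent. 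This fixes the window in which \signal{E\_OUT} holds a well-defined value and sets up the data-pipeline timing.

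Next I would discharge the metastability-free claims. For a stable $1$ at the \MCZeroLatch{} (claim~3), the fast path filter \signal{NAND} passes the flank, so the \signal{path combining} gate sees a logic-$0$ on its fast input shortly after the flank arrives and produces a rising output with the small fixed delay $\delta$; for a stable $0$ (claim~4), the fast path is blocked and the flank propagates only via the slow top path, arriving $\delta + (1\pm\varepsilon)T/4$ later. Claim~2 (output high time $(1\pm\varepsilon)T/2$) then follows because the pulse shaper, fed the \signal{path combining} gate's output, restores the high time to roughly $T/2$ by hypothesis; here I must check that the signal handed to the pulse shaper meets its stated input requirements (glitch-free, high time at least $(1-\varepsilon)T/2$, low time at least $(1-3\varepsilon)T/4$), which follows from the slow/fast timing envelope and the low time of the input clock. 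Claim~1 (data pipeline) is the timing-matching argument already sketched: the master latches roughly $T/4$ (plus small delays) after the rising input flank, which lies within the window during which the element to the right still presents the value its own master captured on the previous cycle; latching is neither too late (the within-cell master--slave skew of about one inverter delay is dwarfed by the inter-element clock propagation) nor too early (the total rising-flank delay stays well below $T$ even on the slow path).

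The substantive part is claim~5, the metastability case, which I would prove by combining the masking-latch guarantees from \Cref{subsec:masking} with an induction from right to left along the chain. The masking-latch properties give that each of the master's outputs is stable $b$ with at most a single late $b$-to-$(1-b)$ transition, so at most one of the two slave latches can be driven metastable; I would then split into cases. If the \MCZeroLatch{} is metastable, a late $0$-to-$1$ transition produces a rising \signal{path combining} output somewhere in the interval between the stable-$1$ and stable-$0$ timings, so the fractional delay $x\in[0,(1+\varepsilon)T/4]$ lands inside the already-established envelope with no glitch, and crucially the \Latch{} captures a stable $0$, so \signal{E\_OUT} behaves as a stable $0$ and all subsequent flanks are delayed by $T/4$. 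If instead the \Latch{} is metastable, then the \MCZeroLatch{} captured a stable $1$ (fast propagation, delay $\delta$), and the metastability can only be passed leftward to the next master; induction on the position of the element where resolution occurs then shows that delay $\delta$ is applied up to that element, with the fractional delay and a subsequent permanent $T/4$ delay appearing there.

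I expect the main obstacle to be making the induction in claim~5 airtight, specifically pinning down the invariant that a delayed clock pulse is always followed only by equally-delayed pulses. This is exactly the coupling enforced by the specific pairing of the \MCZeroLatch{} and the \Latch{} to a single master: one must rule out the dangerous scenario in which delay is applied to the current pulse while the value propagated to the left would let later pulses escape the delay. Turning the informal ``only one slave can go metastable'' statement into a rigorous case split that covers the transparency windows of the slaves relative to the master's possible late transition---and confirming that in every case the pair (delay applied now, value handed left) is consistent---is the delicate step; the remaining timing inequalities are routine once the envelope from claims~2--4 is in hand.
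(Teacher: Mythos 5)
Your proposal matches the paper's argument in essentially every respect: a per-cycle timing analysis establishing the delay envelope $[\delta', \delta'+(1+\varepsilon)T/4]$ for the rising flank at the path combining gate, discharge of the high/low-time preconditions of the pulse shaper to obtain property~2, the master/slave latching-window argument for the data pipeline, and the metastability case split based on the fact that only one of the two slave latches can be driven metastable, with the right-to-left chain induction for the case where the \Latch{} (rather than the \MCZeroLatch{}) goes metastable. The paper packages this as an induction on the cycle number and defers most of the timing details to the preceding prose (including the ``Handling Metastability'' paragraph, which contains exactly your chain induction), but the decomposition and the key invariant --- that a delayed pulse forces all subsequent pulses to be equally delayed, enforced by pairing both slaves to the one masking master --- are the same.
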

\begin{proof}[Proof Sketch]
The proof is by induction on the cycle number. On boot-up, \signal{CLK\_IN} is low for sufficiently long that all derived signals in the element settle. Now assume that we showed all properties for the first $i-1\ge 0$ cycles and consider cycle $i$. Due to the constraints on the low and high time of \signal{CLK\_IN}, the above discussion shows the first property. Moreover, regardless of the internal state of the $0$-masking latch after being latched for the $i$-th time (including metastability), the $i$-th rising clock flank is reproduced by the second \signal{NAND} gate with a delay between $\delta'$ (for some $\delta'$ with negligible variation between consecutive cycles) and $\delta'+(1+\varepsilon)T/4$. The $i$-th falling clock flank is subject to the delay $\delta'$ as well, regardless of the sampled value.

Recall the assumptions on \signal{CLK\_IN}. We get that the high time after the $i$-th rising clock flank fed into the pulse shaper is at least $(1-\varepsilon)T/2$, and the low time is at least $T-(1+\varepsilon)3T/4=(1-3\varepsilon)T/4$. Hence, the pulse shaper will guarantee the second property required from the delay element for the $i$-th pulse. Moreover, it will apply a fixed delay $\delta''$ to the rising clock flank. Thus, with $\delta:=\delta'+\delta''$, the remaining properties for the $i$-th cycle follow from the above observations regarding the timing of the rising clock flank fed into the pulse shaper depending on the sampled value and that metastability can be induced only in one of the two slave latches, with the respective other receiving the stable value defined by the masking output of the master latch.
\end{proof}

\subsection{Pulse Shaper}
\begin{figure*}[t!]
    \centering
    \includegraphics[width=0.9\textwidth]{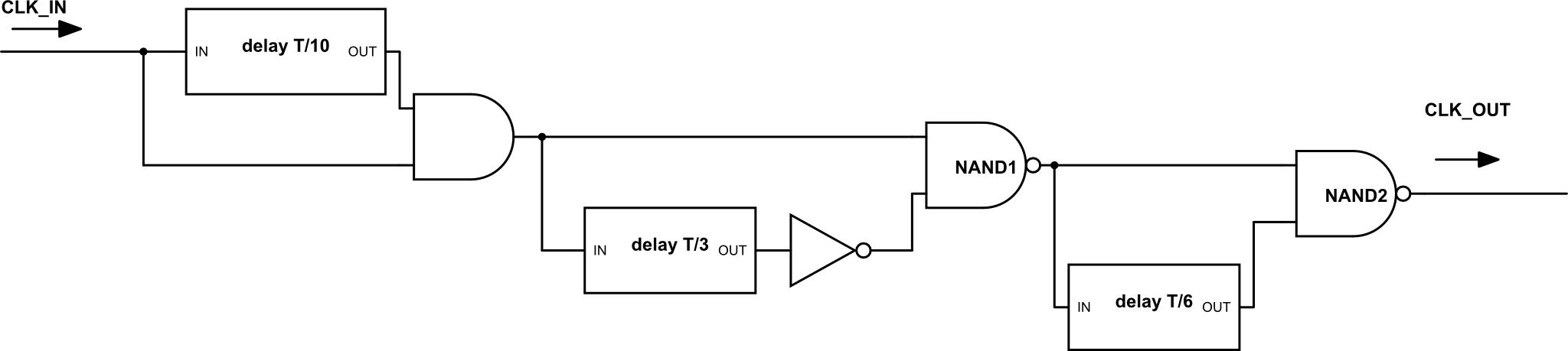}
    \caption{An idealised design of our pulse shaper. To allow for small PVT variations in the test setting we reduce the length of the delay lines \signal{delay T/3} and \signal{delay T/6} in the actual design.
    \label{fig:Pulse_Shaper}}
\end{figure*}

\begin{figure*}[t!]
    \centering
    \includegraphics[width=0.6\textwidth]{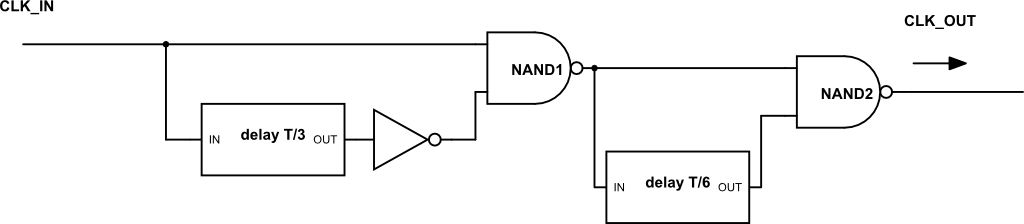}
    \caption{The Pulse Shaper of \cite{DroopJournal} for comparison with our design.
    \label{fig:Old_Pulse_Shaper}}
\end{figure*}

The pulse shaper is a submodule of the delay element. Its purpose is to ensure that the high time of the clock signal stays invariant, regardless of whether the pulse is delayed or not. The implementation proposed in~\cite{DroopJournal}, shown in \Cref{fig:Old_Pulse_Shaper}, was not proven correct. During our development process we encountered a flaw in its design. Before we discuss the fix we applied, let us analyze the design from~\cite{DroopJournal}. 

\paragraph{The Old Pulse Shaper} The schematic for the old pulse shaper is shown in \Cref{fig:Old_Pulse_Shaper}. First we briefly recap its behaviour under ideal conditions, where delays are perfect. It takes as input, a potentially uneven input clock of time period $T$, which holds \signal{logic 1} for time $xT\ (0 < x < 1)$  and low time $(1-x)T$. Provided $1/3 < x < 2/3$, it outputs an even clock which holds logic value $1$ for time $T/2$ and $0$ for logic value $T/2$. Further, for each input pulse arriving at time $t$, it provides a pulse of the output clock at time $t$. This pulse shaper design has two stages:
\begin{itemize}
    \item The first stage which begins at the input and terminates at the first NAND gate shifts and inverts the incoming clock by a fixed time delay of T/3. Then it recombines the original input clock and the delayed and inverted clock. If $x < 2/3$, the result is an output clock of time period $T$, such that for each input clock pulse arriving at time $t$, a corresponding output pulse starts at time $t$ with \signal{logic 0} for duration $T/3$ followed by \signal{logic 1} for duration $2T/3$. 
    \item The second stage takes the clock output by the first stage and combine it through a NAND gate with a version of the same clock delayed by $T/6$. The result is an output clock whose pulses start at the same time as input pulses, with \signal{logic 1} for duration $T/2$ and \signal{logic 0} for $T/2$
\end{itemize}

\paragraph{The flaw in the plan} The aforementioned design is, however, flawed on a two counts. Firstly, even in ideal design setting, when a delay element detects a droop, its pulse shaper receives clock pulses which hold \signal{logic 1} for a duration of $3T/4$. However, the pulse shaper of \cite{DroopJournal} is only designed to operate when this duration is at most $2T/3$. if $x > 2T/3$, the clock pulse will start later than $t$ and have \signal{logic 0} for a time less than $T/3$. Further, if $x > 5T/6$, then the output clock of stage 1 holds \signal{logic 1} for less than a duration of $T/6$. Thus in the second stage, the rising flank of each pulse in the delayed clock occurs after the falling flank of the undelayed clock. This creates a glitchy output clock. In an idealised delay setting, this second issue might not be of concern. However, delay variations or buffers which unevenly delay rising and falling clock flanks can violate the assumptions about the duration of \signal{logic 1} in clock cycles. Further, in a PLL free design we cannot be certain that our delay cells actually remain immune to PVT variations. Unlike the phase accumulator we cannot hope to replace our delay lines by a flip-flop based scheme here. Thus, we adopted several changes in our design. 

\begin{itemize}
    \item To address the issue of incoming pulses whose \signal{logic 1} portion might be too large, we introduce an additional phase to the pulse shaper which shortens our pulses by $T/10$. This can be seen in \Cref{fig:Pulse_Shaper} 
    \item We use specialized long-delay buffer cells in our delay line that have smaller variations in the delay of rising and falling clock flanks.
    \item We modify the delays of $T/3$ to $T/4$ and $T/6$ to $T/5$ which increases our tolerance for PVT variations.
\end{itemize} 

 Next we explain how shortening the delay lines in the two stages of the pulse shaper affects our pulse shaper's output. We recall again that in ~\cite{DroopJournal}, the value of $x$ is $T/3$, entailing that the high and low time of the signal must be at least $T/3$. However, the implementation of the delay element reduces the low time of the signal that is fed into the pulse shaper from $T/2$ to $T/4$ during nominal operation, i.e., when the clock signal is not delayed. This means that the delayed and negated input to the first \signal{NAND} gate of the pulse shaper implementation is still low when the new rising clock flank arrives. Accordingly, the \signal{NAND} gate transitions only when the delay line output transitions, i.e., (roughly) $T/3-T/4$ time after the rising clock flank arrives. Consequently, the \signal{NAND} gate has low output for only $T/4$ time, to which $T/6$ time is added for a high time of $T/4+T/6=5T/12$ at the output. Note, however, that the overall delay of the pulse shaper is affected as well, and if the delay element delays the rising clock flank, it does not change the high time of the respective clock pulse.

This behavior of the circuit from~\cite{DroopJournal} was confirmed in simulation. Moreover, if delay variations result in a high time approaching $T/3$ -- which is only by $T/12$ smaller than $5T/12$ -- we end up with glitches in the clock signal; reducing the high time further leads to lost clock pulses. During the design of our circuit, we encountered exactly such glitches in simulations, alerting us of the oversight in~\cite{DroopJournal}, where the issue was missed due to the assumption of accurate delays.

\paragraph*{Our implementation}
Observe that the delay line in the first stage of the pulse shaper design from~\cite{DroopJournal} limits the low time of the intermediate signal, while the second stage produces a corresponding high time extended by the delay of the second delay line. The latter must be shorter than the low time of the pulse it receives, as the output of the delay line must transition before the undelayed input signal to the second \signal{NAND} gate, to avoid glitches in the output signal. However, there is no need to extend the high time by $T/2-x$ in a single step. Instead, one can add smaller amounts that add up to $T/2-x$ in multiple stages. For example, using four stages, one could choose delays of $T/8$, $T/10$, $T/6$, and $T/2-T/6-T/8-T/10=13T/120$, respectively, resulting in the constraints
\begin{align*}
T-\frac{(1+\varepsilon)3T}{4}&>\frac{(1+\varepsilon)T}{8}\\
\frac{(1-\varepsilon)T}{8}&>\frac{(1+\varepsilon)T}{10}\\
(1-\varepsilon)T\left(\frac{1}{8}+\frac{1}{10}\right)&>\frac{(1+\varepsilon)T}{6}\\
(1-\varepsilon)T\left(\frac{1}{6}+\frac{1}{8}+\frac{1}{10}\right)&>\frac{(1+\varepsilon)13T}{120},
\end{align*}
which is satisfied for $\varepsilon<1/9$, i.e., a delay variation of more than $10\%$ could be supported.

However, even more stages would be required to account for the possibility of even higher variations. Instead, we decided to exploit that we are free to reduce the input clock frequency, because the phase accumulator can operate with any input period of at least $T$. The delay element and hence also the pulse shaper are then presented with larger high and low times of their input signals, relaxing the constraints on the pulse shaper accordingly. With this in mind, we prepended only a single stage to the pulse shaper, which shortened input pulses by roughly $T/10$, cf.~\Cref{fig:Pulse_Shaper}. This struck a balance between limiting the deterioation in the supported clock frequency and design and verification effort. Based on the above analysis, this pushed the high and low times for the signal fed into the subsequent stages within the acceptable limits, but only for an $\varepsilon$ of less than $2\%$. Therefore, in our implementation any further variation has to be compensated for by reducing the frequency at which the system operates.

\subsection{Droop Detector}

\begin{figure*}[t!]
    \centering
    \includegraphics[width=0.95\textwidth]{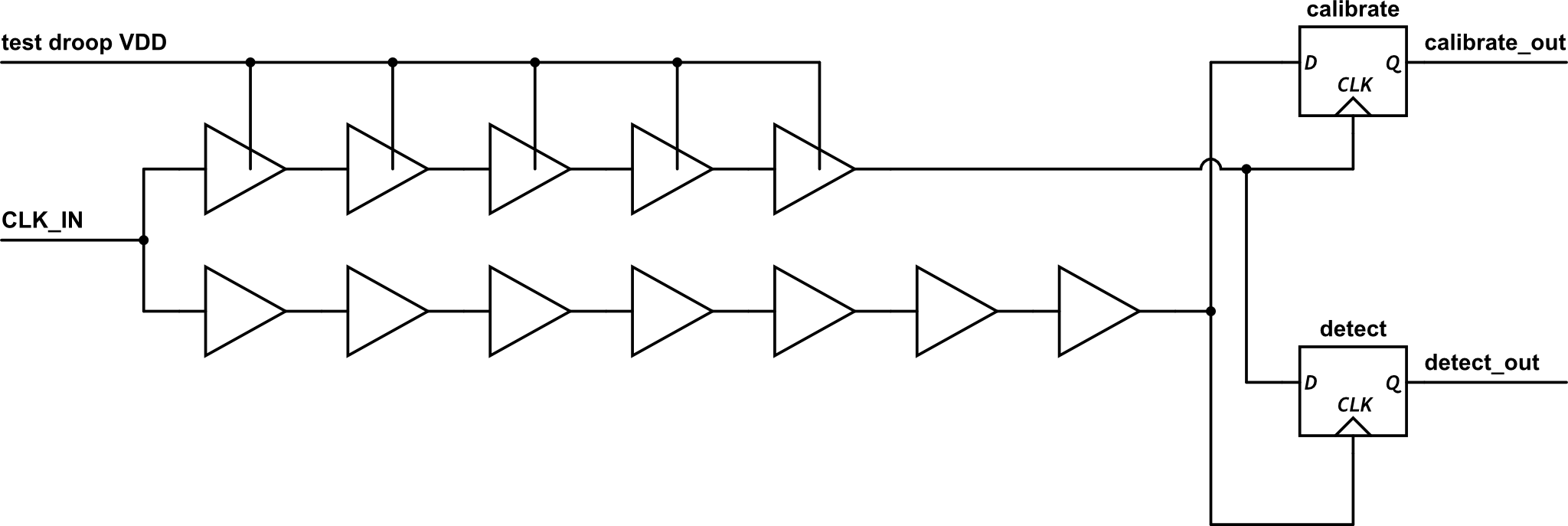}
    \caption{The Design of our Droop Detector}
    \label{fig:droop-detector}
\end{figure*}
Finally, we describe the design of the droop detector we used on our test chip. Since our circuit can operate with any detector design, we did not seek an optimized solution in our proof-of-concept circuit. Accordingly, we favored simplicity of design over performance in this subcircuit.

Our design implements a variation of the schematic presented in \Cref{fig:droop-detector}. In our design, we exploit the excess delay that a droop voltage induces in circuit elements. Our circuit consists of two delay lines, one slightly longer than the other. All buffers in the lower delay line, which we call the \emph{reference line} are connected to the chip's \signal{VDD} and \signal{GND} lines. For an $x$ determined by the process technology, this lower line contains $x+2$ buffers. The upper line, which we call the \emph{droop test line} contains $x$ buffers. Both these lines receive \signal{CLK\_IN} as common input. For the purpose of experimentation, we also use a variant of the design in which we can freely alter the \signal{VDD} inputs of all buffers on the droop test line, by connecting them to an analog voltage input named \signal{test droop VDD}. When testing our droop detector, this line then receives controlled test voltages below the standard \signal{VDD} voltage. 

Under baseline operating conditions $\signal{test droop VDD} = \signal{VDD}$. Thus the delay of each buffer cell on both the reference line and droop test line is almost identical (barring process variations). Thus the sampling edge of \signal{CLK\_IN} arrives 2 buffer delays earlier at the end of the droop test line than it does at the end of the reference line. From the perspective of the calibrate \signal{D Flip Flop}, at each sampling edge of its clock input, the data sample is \signal{logic 1}. Thus it maintains an output of \signal{logic 1}. For the detect \signal{D Flip Flop}, the data input remains \signal{logic 0} at each sampling edge of its clock input. Thus it maintains an output of \signal{logic 0}.

When the signal \signal{test droop VDD} is given a sufficiently low voltage input, i.e. a sufficiently large voltage droop, the collective delay of the droop test line exceeds the delay of the reference line. Thus the situation at the flip flops is reversed. The calibrate flip flop now outputs \signal{logic 0} and the detect flip flop outputs \signal{logic 1}.

\section{Synthesis and Layout}\label{sec:synthesis}
In this section we describe our design decisions related to synthesis and layout.
\begin{figure}
    \centering
    \includegraphics[width=0.5\linewidth]{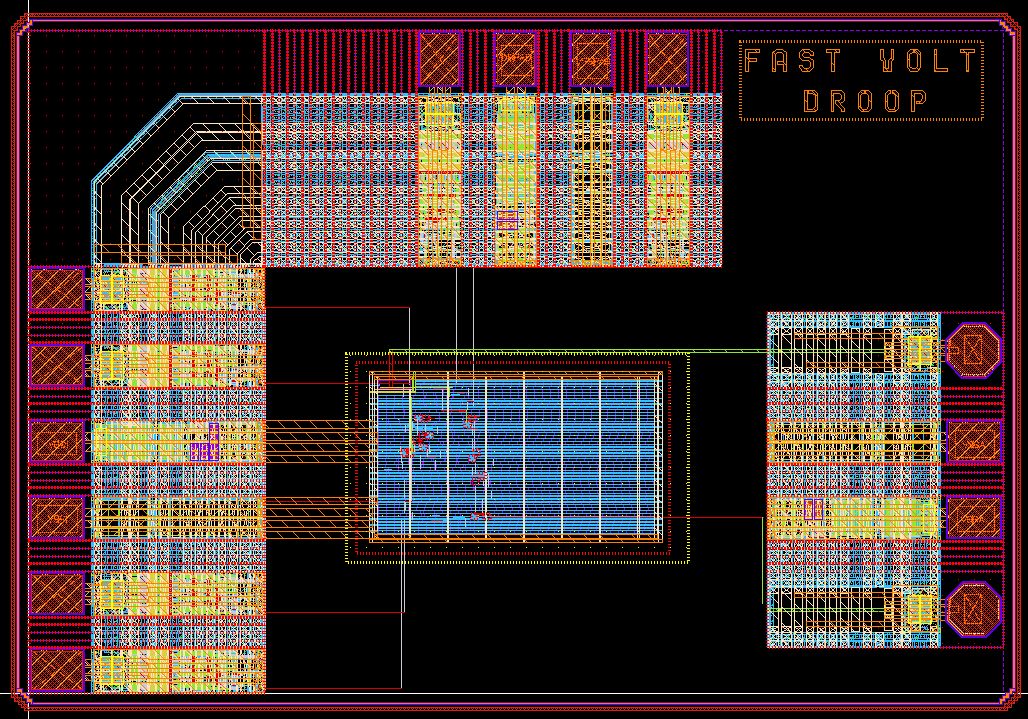}
    \caption{The layout of our chip}
    \label{fig:chip-pic-virtuoso}
\end{figure}

The target technology for implementing the chip design was a 130 nm CMOS process from IHP. Before we started implementing the entire design, we created the required library components for two masking latches (see \Cref{fig:MC_Latch_0,fig:MC_Latch_01}) as well as for the droop detector circuit (\Cref{fig:droop-detector}). This included Verilog functional models, timing models (Liberty), abstracts (LEF), schematics (Cadence), and layouts (Cadence) of all cells. The two latch cells were designed to meet the library requirements of the IHP standard cell design, allowing for automatic placement and routing during the implementation process. The droop detector circuit was designed as a hard macro and manually placed during floorplan design. Regarding the two analog signals of the droop detector circuit (VDD and Detect), we designed customized analog pads with ESD protection that allowed driving and detecting voltage levels below the standard VDD. The final design was synthesized for the target frequency of 400 MHz. During the layout design phase, we had to pay special attention to the clock buffer delays in the clock tree synthesis process, since we had to properly balance the propagation of the clock tree at the clock input of the droop detector, which already included customized delay lines in the clock net as part of the block macro. Finally, the total area of the chip was about 1 mm², most of which was occupied by the padring, which included a total of 16 I/O pads (10 signal and 6 power/ground pads). The power consumption was estimated at 4 mW. The chip photo is shown in \Cref{fig:chip-pic-virtuoso} .


\bibliographystyle{IEEEtran}
\bibliography{conference}

\listoftodos
\end{document}